\renewcommand{\mathcal}[1]{\mathscr{#1}}
\renewcommand{\phi}{\varphi}
\newcommand{\bb}[1]{\mathbb{#1}}
\newcommand{\nm}[1]{\textit{#1}}
\date{}
\title{Circuit Complexity Meets \\ Ontology-Based Data Access}
\author{Vladimir V. Podolskii\footnote{This work is supported by the Russian Science Foundation under grant
14-50-00005 and performed in Steklov Mathematical Institute of Russian
Academy of Sciences.
The final publication to appear in proceedings of CSR 2015, LNCS 9139, Springer.}\\[3pt]
\small Steklov Mathematical Institute, Moscow, Russia\\
\small National Research University Higher School of Economics, Moscow, Russia\\
\small   \href{mailto:podolskii@mi.ras.ru}{podolskii@mi.ras.ru}
}
\date{}
\begin{document}

\maketitle

\begin{abstract}
Ontology-based data access is an approach to organizing access to a database augmented with a logical theory.
In this approach query answering proceeds through a reformulation of a given query into a new one
which can be answered without any use of theory.
Thus the problem reduces to the standard database setting.

However, the size of the query may increase substantially during the reformulation.
In this survey we review a recently developed framework on proving lower and upper bounds on the size of this reformulation
by employing methods and results from Boolean circuit complexity.
\end{abstract}

\section{Introduction}

Ontology-based data access is an approach to storing and accessing data in a database\footnote{We use the word ``database'' in a wide informal sense, that is a database is an organized collection of data.}. In this approach the database is augmented with a first-order logical theory, that is
the database is viewed as a set of predicates on elements (entities) of the database
and the theory contains some universal statements about these predicates.

The idea of augmenting data with a logical theory has been around since at least 1970s (the Prolog programming language, for example, is in this flavor~\cite{Kowalski:1988:EYL:35043.35046}).
However, this idea had to constantly overcome implementational issues.
The main difficulty is that if the theory accompanying the data is too strong,
then even standard algorithmic tasks become computationally intractable.

One of these basic algorithmic problems will be of key interest to us, namely the query answering problem.
A query to a database seeks for all elements in the data with certain properties.
In case the data is augmented with a theory, query answering cannot be handled directly with the same methods as for usual databases
and new techniques are required.

Thus, on one hand, we would like a logical theory to help us in some way and, on the other hand, we
need to avoid arising computational complications.

Ontology-based data access (OBDA for short) is a recent approach in this direction developed since around 2005~\cite{CDLLR05,DFKM*08,HMAM*08,PLCD*08}.
Its main purpose is to help maintaining large and distributed
data and make the work with the data more user-friendly.
The logical theory helps in achieving this goal by allowing one to create a convenient language for queries,
hiding details of the structure of the data source, supporting queries to distributed and heterogeneous data sources.
Another important property is that data does not have to be complete.
Some of information may follow from the theory and not be presented in the data explicitly.

A key advantage of OBDA is that to achieve these goals, it is often enough in practice to supplement the data with a rather primitive theory.
This is important for the query answering problem:
the idea of OBDA from the algorithmic point of view is not to develop a new machinery, but to reduce
query answering with a theory to the standard database query answering and use the already existing machinery.

The most standard approach to this is to first reformulate a given query in such a way that the answer to the new query does not depend on the theory anymore.
This reformulation is usually called a \emph{rewriting} of the query.
The rewriting should be the same for any data in the database.
Once the rewriting is built we can apply standard methods of database theory.
Naturally, however, the length of the query typically increases during the reformulation
and this might make this approach (at least theoretically) inefficient.

The main issue we address in this survey is how large the rewriting can be compared to the size of the original query.
Ideally, it would be nice if the size of the rewriting is polynomial in the size of the original query.
In this survey we will discuss why rewritings can grow exponentially in some cases and how Boolean circuit complexity helps us to obtain
results of this kind.

In this survey we will confine ourselves to data consisting only of unary and binary predicates over the database elements.
If data contains predicates of larger arity, the latter can be represented via binary predicates.
Such representations are called \emph{mappings} in this field and there are several ways for doing this.
We leave the discussion of mappings aside and refer the reader to~\cite{chapter_KZ} and references therein.
We call a data source with unary and binary predicates augmented with a logical theory a \emph{knowledge base}.

As mentioned above, in OBDA only very restricted logical theories are considered.
There are several standard families of theories, including OWL~2~QL~\cite{profiles,CDLLR07,ACKZ09} and several fragments of Datalog$^{\pm}$~\cite{DBLP:journals/ws/CaliGL12,DBLP:conf/ijcai/BagetLMS09,CaliGP10,DBLP:journals/ai/CaliGP12}.
The lower bounds on the size of rewritings we are going to discuss work for even weaker theories contained in all families mentioned above.
The framework we describe also allows one to prove upper bounds on the size of the rewritings that work for theories given in OWL~2~QL.
We will describe the main ideas for obtaining upper bounds, but will not discuss them in detail.

To give a complete picture of our setting, we need also to discuss the types of queries and rewritings we consider.
The  standard type of queries (as a logical formulas) considered in this field is conjunctive queries,
i.e. conjunctions of atomic formulas prefixed by existential quantifiers.
In this survey we will discuss only this type of queries.

As for rewritings, it does not make sense to consider conjunctive formulas as rewritings, since their expressive power is rather poor.
The simplest type of rewritings that is powerful enough
to provide a rewriting for every query is a DNF-rewriting, which is a disjunction of conjunctions with an existential quantifiers prefix.
However, it is not hard to show (see~\cite{GKKPSZ14AI}) that this type of rewriting may be exponentially larger than the original query.
More general standard types of rewritings are first-order (FO-) rewritings, where a rewriting can be an arbitrary first-order formula,
positive existential (PE-) rewritings, which are first-order formulas containing only existential quantifiers and no negations
(this type of rewritings is motivated by its convenience for standard databases), and the nonrecursive datalog rewriting, which are not first-order formulas but rather are constructed in a more circuit-flavored way (see Section~\ref{sec:rewritings_intro} for details).

For these more general types of rewritings it is not easy to see how the size of the rewriting grows in size of the original query.
The progress on this question started with the paper~\cite{KKZ11DL}, where it was shown
that the polynomial size FO-rewriting cannot be constructed in polynomial time, unless $\P=\NP$. Soon after that, the approach of that paper was extended in~\cite{KKPZ12ICALP,GKKPSZ14AI} to give a much stronger result: not only there is no way to construct a FO-rewriting in polynomial time, but even there is no polynomial size FO-rewriting, unless $\NP \subseteq \P/\poly$. It was also shown (unconditionally!) in~\cite{KKPZ12ICALP,GKKPSZ14AI} that there are queries and theories for which the shortest PE- and NDL-rewritings are exponential in the size of the original query. They also obtained an exponential separation between PE- and NDL-rewritings and a superpolynomial separation between PE- and FO-rewritings.

These results were obtained in~\cite{KKPZ12ICALP,GKKPSZ14AI}
by reducing the problems of lower bounding the rewriting size to some problems in computational complexity theory.
Basically, the idea is that we can encode a Boolean function $f \in \NP$ into a query $q$ and design the query and the theory in such a way that a FO-rewriting of $q$ will provide us with a Boolean formula for $f$, a PE-rewriting of $q$ will correspond to a monotone Boolean formula, and an NDL-rewriting --- to a monotone Boolean circuit. Then by choosing an appropriate $f$ and applying known results from circuit complexity theory, we can deduce the lower bounds on the sizes of the rewritings.

The next step in this line of research was to study the size of rewritings for restricted types of queries and knowledge bases. A natural subclass of conjunctive queries is the class of tree-like queries. To define this class, for a given query consider a graph whose vertices are the variables of the query and an edge connects two variables if they appear in the same predicate in the query. We say that a query is a \emph{tree-like} if this graph is a tree.
A natural way to restrict theories of knowledge bases is to consider their depth. Informally, the theory is of depth $d$ if, starting with a data and generating all new objects whose existence follows from the given theory, we will not obtain in the resulting underlying graph any sequences of new objects of length greater than $d$. These kinds of restrictions on queries and theories are motivated by practical reasons: they are met in the vast majority of applications of knowledge bases. On the other hand, in papers~\cite{KKPZ12ICALP,GKKPSZ14AI} non-constant depth theories were used to prove lower bounds on the size of rewritings.

Subsequent papers~\cite{LICS14,DL14} managed to describe a complete picture of the sizes of the rewritings in restricted cases described above. To obtain these results, they determined, for each case mentioned above, the class of Boolean functions $f$ that can be encoded by queries and theories of the corresponding types. This establishes a close connection between ontology-based data access and various classes in Boolean circuit complexity. Together with known results in Boolean circuit complexity, this connection allows one to show various lower and upper bounds on the sizes of rewritings in all described cases.
The precise formulation of these results is given in Section~\ref{sec:proof_idea}.

To obtain their results, \cite{LICS14,DL14} also introduced a new intermediate computational model, the \emph{hypergraph programs}, which might be of independent interest. A hypergraph program consists of a hypergraph whose vertices are labeled by Boolean constants, input variables $x_1, \ldots, x_n$ or their negations. On a given input $\vec{x} \in \zo^n$, a hypergraph program outputs $1$ iff all its vertices whose labels are evaluated to $0$ on this input can be covered by a set of disjoint hyperedges. We say that a hypergraph program computes $f \colon \zo^ n \to \zo$ if it outputs $f(\vec{x})$ on every input $\vec{x} \in \{0,1\}^n$. The size of a hypergraph program is the number of vertices plus the number of hyperedges in it.

Papers~\cite{LICS14,DL14} studied the power of hypergraph programs and their restricted versions. As it turns out, the class of functions computable by general hypergraph programs of polynomial size coincides with $\NP/\poly$~\cite{LICS14}. The same is true for hypergraph programs of degree at most $3$, that is for programs in which the degree of each vertex is bounded by $3$. The class of functions computable by polynomial size hypergraph programs of degree at most $2$ coincides with $\NL/\poly$~\cite{LICS14}. Another interesting case is the case of \emph{tree hypergraph programs} which have an underlying tree and all hyperedges consist of subtrees. Tree hypergraph programs turn out to be equivalent to $\SAC^{1}$ circuits~\cite{DL14}. If the underlying tree is a path, then polynomial size hypergraph programs compute precisely the functions in $\NL/\poly$~\cite{DL14}.

The rest of the survey is organized as follows.
In Section~\ref{sec:circuits_intro} we give the necessary definitions from Boolean circuit complexity.
In Section~\ref{sec:rewritings_intro} we give the necessary definitions and basic facts on knowledge bases.
In Section~\ref{sec:proof_idea} we describe the main idea behind the proofs of bounds on the size of the rewritings.
In Section~\ref{sec:hypergraph_intro} we introduce hypergraph programs and explain how they help to bound the size of the rewritings.
In Section~\ref{sec:hypergraph_complexity} we discuss the complexity of hypergraph programs.

\section{Boolean circuits and other computational models} \label{sec:circuits_intro}

In this section we provide necessary information on Boolean circuits, other computational models and related complexity classes.
For more details see~\cite{Jukna12}.

\emph{A Boolean circuit} $C$ is an acyclic directed graph. Each vertex of the graph is labeled by either a variable among $x_1, \ldots, x_n$, or a constant $0$ or 1, or a Boolean function $\neg$, $\wedge$ or $\vee$. Vertices labeled by variables and constants have in-degree $0$, vertices labeled by $\neg$ have in-degree $1$,
vertices labeled by $\wedge$ and $\vee$ have in-degree $2$. Vertices of a circuit are called \emph{gates}.
Vertices labeled by variables or constants are called \emph{input gates}.
For each non-input gate $g$ its \emph{inputs} are the gates which have out-going edges to $g$.
One of the gates in a circuit is labeled as an \emph{output gate}.
Given $\vec{x} \in \{0,1\}^n$, we can assign the value to each gate of the circuit inductively.
The values of each input gate is equal to the value of the corresponding variable or constant.
The value of a $\neg$-gate is opposite to the value of its input. The value of a $\wedge$-gate is equal to $1$ iff both its inputs are $1$.
The value of a $\vee$-gate is $1$ iff at least one of its inputs is $1$. The value of the circuit $C(\vec{x})$ is defined as the value of its output gate on $\vec{x} \in \{0,1\}^n$.
A circuit $C$ \emph{computes} a function $f \colon \{0,1\}^n \to \{0,1\}$ iff $C(\vec{x})=f(\vec{x})$ for all $\vec{x} \in \{0,1\}^n$. The size of a circuit is the number of gates in it.

The number of inputs $n$ is a parameter. Instead of individual functions, we consider sequences of functions $f = \{f_n\}_{n \in \bb{N}}$, where $f_n \colon \zo^n \to \zo$. A sequence of circuits $C = \{C_n\}_{n \in \bb{N}}$ \emph{computes} $f$ iff $C_n$ computes $f_n$ for all $n$.
From now on, by a Boolean function or a circuit we always mean a sequence of functions or circuits.

A \emph{formula} is a Boolean circuit such that each of its gates has fan-out $1$.
A Boolean circuit is \emph{monotone} iff there are no negations in it.
It is easy to see that any monotone circuit computes a monotone Boolean function and,
on the other hand, any monotone Boolean function can be computed by a monotone Boolean circuit.

A circuit $C$ is a \emph{polynomial size circuit} (or just polynomial circuit) if there is a polynomial $p \in \bb{Z}[x]$ such that the size of $C_n$ is at most $p(n)$.

Now we are ready to define several complexity classes based on circuits.
A Boolean function $f$ lies in the class $\P/\poly$ iff there is a polynomial size circuit $C$ computing $f$.
A Boolean function $f$ lies in the class $\NC^1$ iff there is a polynomial size formula $C$ computing $f$.
A Boolean function $f$ lies in the class $\NP/\poly$ iff there is a polynomial $p(n)$ and a polynomial size circuit $C$ such that for all $n$ and for all $\vec{x} \in \{0,1\}^n$
\begin{equation}\label{eq:NP_poly}
f(\vec{x}) = 1 \ \Leftrightarrow \ \exists \vec{y} \in \zo^{p(n)} \ C_{n+p(n)}(\vec{x},\vec{y}) = 1.
\end{equation}
Complexity classes $\P/\poly$ and $\NP/\poly$ are nonuniform analogs of $\P$ and $\NP$.

We can introduce monotone analogs of $\P/\poly$ and $\NC^1$ by considering only monotone circuits or formulas.
In the monotone version of $\NP/\poly$ it is only allowed to apply negations directly to $\vec{y}$-inputs.

The \emph{depth} of a circuit is the length of the longest directed path from an input to the output of the circuit.
It is known that $f \in \NC^1$ iff $f$ can be computed by logarithmic depth circuit~\cite{Jukna12}.
By $\SAC^1$ we denote the class of all Boolean functions $f$ computable by a polynomial size logarithmic depth circuit
such that $\vee$-gates are allowed to have arbitrary fan-in and all negations are applied only to inputs of the circuit~\cite{circuit}.

A \emph{nondeterministic branching program} $P$ is a directed graph $G=(V,E)$, with edges labeled by Boolean constants, variables $x_1, \ldots, x_n$ or their negations.
There are two distinguished vertices of the graph named $s$ and $t$.
On an input $\vec{x} \in \zo^n$ a branching program $P$ outputs $P(\vec{x})=1$ iff there is a path from $s$ to $t$ going through edges whose labels evaluate to $1$. A nondeterministic branching program $P$ \emph{computes} a function $f \colon \zo^n \to \zo$ iff for all $\vec{x} \in \zo^n$ we have $P(\vec{x})=f(\vec{x})$. The \emph{size} of a branching program is the number of its vertices plus the number of its edges $|V| + |E|$.
A branching program is \emph{monotone} if there are no negated variables among labels.

Just as for the functions and circuits, from now on by a branching program we mean a sequence of branching programs $P_n$ with $n$ variables for all $n \in \bb{N}$.

A branching program $P$ is a \emph{polynomial size branching program} if there is a polynomial $p \in \bb{Z}[x]$ such that the size of $P_n$ is at most $p(n)$.

A Boolean function $f$ lies in the class $\NBP$ iff there is a polynomial size branching program $P$ computing $f$.
It is known that $\NBP$ coincides with nonuniform analog of nondeterministic logarithmic space $\NL$, that is $\NBP= \NL/\poly$~\cite{Jukna12,Razborov91}.

For every complexity class $\K$ introduced above, we denote by $\mK$ its monotone counterpart.

The following inclusions hold between the classes introduced above~\cite{Jukna12}
\begin{equation} \label{eq:inclusions}
\NC^1 \subseteq \NBP \subseteq \SAC^1 \subseteq \P/\poly \subseteq \NP/\poly.
\end{equation}
It is a major open problem in computational complexity whether any of these inclusions is strict.

Similar inclusions hold for monotone case:
\begin{equation} \label{eq:inclusions_monotone}
\mNC^1 \subseteq \mNBP \subseteq \mSAC^1 \subseteq \mP/\poly \subseteq \mNP/\poly.
\end{equation}

It is also known that $\mP/\poly \neq \mNP/\poly$~\cite{Razborov85,AlonB87} and $\mNBP \neq \mNC^1$~\cite{Karchmer88}.
We will use these facts to prove lower bounds on the rewriting size.

\section{Theories, queries and rewritings} \label{sec:rewritings_intro}

In this survey a data source is viewed as a first-order theory. It is not an arbitrary theory and must satisfy some restrictions, which we specify below.

First of all, in order to specify the structure of data, we need to fix a set of predicate symbols in the signature.
Informally, they correspond to the types of information the data contains.
We assume that there are only unary and binary predicates in the signature.
The data itself consists of a set of objects (entities) and of information on them.
Objects in the data correspond to constants of the signature.
The information in the data corresponds to closed atomic formulas,
that is predicates applied to constants.
These formulas constitute the theory corresponding to the data.
We denote the resulting set of formulas by $D$ and the set of constants in the signature by $\Delta_D$.

We denote the signature (the set of predicate symbols and constants) by $\Sigma$.
Thus, we translated a data source into logical terms.
To obtain knowledge base, we introduce more complicated formulas into the theory.
The set of these formulas will be denoted by $T$ and called an \emph{ontology}.
We will describe which formulas can be presented in $T$ a bit later.
The theory $D \cup T$ is called a \emph{knowledge base}.
Predicate symbols and the theory $T$ determines the structure of the knowledge base and thus will be fixed.
Constants $\Delta_D$ and atomic formulas $D$, on the other hand, determine the current containment of the data,
so they will be varying.

As we mentioned in Introduction, we will consider only conjunctive queries. That is, a query is a formula of the form
$$
q(\vec{x}) = \exists \vec{y} \phi(\vec{x},\vec{y}),
$$
where $\phi$ is a conjunction of atomic formulas (or atoms for short).
For simplicity we will assume that $q$ does not contain constants from $\Delta_D$.

What does the query answering mean for standard data sources without ontology?
It means that there are values for $\vec{x}$ and $\vec{y}$ among $\Delta_D$ such that the query becomes true on the given data $D$.
That is, we can consider a model $I_D$ corresponding to the data $D$. The elements of the model $I_D$ are constants in $\Delta_D$
and the values of predicates in $I_D$ is given by formulas in $D$.
That is, a predicate $P \in \Sigma$ is true on $\vec{a}$ from $\Delta_D$ iff $P(\vec{a}) \in D$.
The tuple of elements $\vec{a}$ of $I_D$ is an answer to the query $q(\vec{x})$ if
$$
I_D \models \exists \vec{y} \phi(\vec{a},\vec{y}).
$$

Let us go back to our setting. Now we consider data augmented with a logical theory. This means that we do not have a specific model.
Instead, we have a theory and we need to find out whether the query is satisfied in the theory.
That is, the problem we are interested in is, given a knowledge base $D \cup T$ and a query $q(\vec{x})$, to find $\vec{a}$ in $\Delta_D$ such that
$$
D \cup T \models q(\vec{a}).
$$
If $\vec{x}$ is an empty tuple of variables, then the answer to the query is `yes' or `no'. In this case we say that the query is \emph{Boolean}.

The main approach to solving the query answering problem is to first reformulate the query in such a way that the answer to the new query does not depend on the theory $T$ and then apply the machinery for standard databases. This leads us to the following definition. A first-order formula $q^\prime(\vec{x})$ is called a \emph{rewriting} of $q(x)$ w.r.t. a theory $T$ if
\begin{equation} \label{eq:rewriting}
D \cup T \models q(\vec{a})\  \Leftrightarrow \ I_D \models q^{\prime}(\vec{a})
\end{equation}
for all $D$ and for all $\vec{a}$. We emphasize that on the left-hand side in~\eqref{eq:rewriting} the symbol `$\models$' means logical consequence from a theory,
while on the right-hand side it means truth in a model.

We also note that in~\eqref{eq:rewriting} only predicate symbols in $\Sigma$ and the theory $T$ are fixed. The theory $D$ (and thus, the set of constants in the signature) may vary, so the rewriting should work for any data $D$. Intuitively, this means that the structure of the data is fixed in advance and known, and the current content of a knowledge base may change. We would like the rewriting (and thus the query answering approach) to work no matter how the data change.


What corresponds to a model of the theory $D \cup T$? Since the data $D$ is not assumed to be complete,
it is not a model.
A model correspond to the content of the ``real life'' complete data, which extends the data $D$.
We assume that all formulas of the theory hold in the model, that is all information in the knowledge base (including formulas in $T$) is correct.

However, if we allow to use too strong formulas in our ontology, then the problem of query answering will become algorithmically intractable. So we have to allow only very restricted formulas in $T$. On the other hand, for the practical goals of OBDA also only very simple formulas are required.

There are several ways to restrict theories in knowledge bases.
We will use the one that fits all most popular restrictions.
Thus our lower bounds will hold for most of the considered settings.
As for the upper bounds, we will not discuss them in details, however, we mention that they hold for substantially stronger theories and cover OWL 2 QL framework~\cite{profiles}.

Formulas in the ontology $T$ are restricted to the following form
\begin{equation} \label{eq:tgd}
\forall x (\phi(x) \rightarrow \exists y \psi(x,y)),
\end{equation}
where $x$ and  $y$  are (single) variables, $\phi$ is a unary predicate and $\psi(x,y)$ is a conjunction of atomic formulas.

It turns out that if $T$ consists only of formulas of the form~\eqref{eq:tgd}, then the rewriting is always possible.
The (informal) reason for this is that in this case there is always a universal model $M_{D}$ for given $D$ and $T$.
\begin{theorem} \label{thm:universal}
For all theories $D,T$ such that $T$ consists of formulas of the form~\eqref{eq:tgd} there is a model $M_D$ such that
$$
D\cup T \models q(\vec{a}) \ \Leftrightarrow\ M_D \models q(\vec{a})
$$
for any conjunctive query $q$ and any $\vec{a}$.
\end{theorem}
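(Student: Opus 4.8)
The plan is to construct $M_D$ explicitly as the \emph{canonical model} obtained by a chase-style saturation, and then to show it is \emph{universal}, meaning it maps homomorphically into every model of $D \cup T$. The theorem will follow because conjunctive queries are preserved under homomorphisms.

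First I would build $M_D$ by a saturation process starting from $I_D$. Its elements are initially the constants $\Delta_D$, with predicates interpreted exactly as prescribed by $D$. I then repeatedly repair violations of the ontology: whenever an element $a$ satisfies the premise $\phi(a)$ of some axiom $\forall x(\phi(x) \to \exists y\, \psi(x,y))$ in $T$ but no element already witnesses $\psi(a, \cdot)$, I introduce a \emph{fresh} element $b$ and add all atoms of $\psi(a,b)$. Because the premise $\phi$ is unary and each axiom has a single existential witness $y$, this grows a forest of new elements rooted at $\Delta_D$; the process need not terminate, so $M_D$ is taken to be the (possibly infinite) increasing union of all finite stages. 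By construction every atom of $D$ holds in $M_D$ and every axiom of $T$ is satisfied, so $M_D \models D \cup T$. This immediately yields one direction: $D \cup T \models q(\vec{a})$ forces $M_D \models q(\vec{a})$, since $M_D$ is one of the models over which logical consequence quantifies.

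The core step is universality: for every model $N \models D \cup T$ there is a homomorphism $h \colon M_D \to N$ that is the identity on $\Delta_D$. I would define $h$ by induction on the stages of the construction. At the base stage $h$ is the identity on $\Delta_D$, a homomorphism into $N$ because $N \models D$. For the inductive step, suppose a fresh $b$ was added to witness $\phi(x) \to \exists y\, \psi(x,y)$ for some element $a$. Since $\phi(a)$ held at that stage and $h$ is already defined on $a$ as a homomorphism, $N \models \phi(h(a))$; as $N$ satisfies the axiom, there exists $b' \in N$ with $N \models \psi(h(a), b')$, and I set $h(b) = b'$. Taking the union over all stages gives the homomorphism on all of $M_D$.

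Finally I would invoke preservation of conjunctive queries under homomorphisms: if $h \colon M_D \to N$ fixes $\Delta_D$ and $M_D \models \exists \vec{y}\, \psi(\vec{a}, \vec{y})$ with $\psi$ a conjunction of atoms, then mapping the witnesses through $h$ shows $N \models \exists \vec{y}\, \psi(\vec{a}, \vec{y})$. Hence $M_D \models q(\vec{a})$ forces $N \models q(\vec{a})$ for every model $N$ of $D \cup T$, giving $D \cup T \models q(\vec{a})$ and completing the remaining implication. The main obstacle is handling the non-terminating chase cleanly: I must ensure $M_D$ is a well-defined structure as an increasing union, that each repair step preserves rather than destroys previously established atoms (so the stages genuinely form a chain), and that both the homomorphism construction and the finite witnessing of a conjunctive query — which touches only finitely many elements — survive passage to the infinite limit.
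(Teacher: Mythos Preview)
Your proposal is correct and follows essentially the same chase-style construction as the paper's sketch: start from $I_D$, iteratively add fresh witnesses for unsatisfied axioms, and take the (possibly infinite) limit as $M_D$. You in fact supply more detail than the paper does---the paper omits the explicit homomorphism argument and the appeal to preservation of conjunctive queries under homomorphisms, simply asserting that the resulting structure is the universal model and deferring to~\cite{GKKPSZ14AI}.
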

\begin{remark}
Note that the model $M_D$ actually depends on both $D$ and $T$. We do not add $T$ as a subscript since
in our setting $T$ is fixed and $D$ varies.
\end{remark}

The informal meaning of this theorem is that for ontologies $T$ specified by~\eqref{eq:tgd} there is always the most general model.
More formally, for any other model $M$ of $D \cup T$ there is a homomorphism from the universal model $M_D$ to $M$.
We provide a sketch of the proof of this theorem. For us it will be useful to see how the model $M_D$ is constructed.

\begin{proof}[Proof sketch]
The informal idea for the existence of the universal model is that we can reconstruct it from the constants presented in the data $D$.
Namely, first we add to $M_D$ all constants in $\Delta_D$ and we let all atomic formulas in $D$ to be true on them.
Next, from the theory $T$ it might follow that some other predicates should
hold on the constants in $\Delta_D$. We also let them to be true in $M_D$.
What is more important, formulas in $T$ might also imply the existence of new elements related to constants (the formula~\eqref{eq:tgd} implies, for elements $x$ that satisfy $\varphi(x)$, the existence of a new element $y$).
We add these new elements to the model and extend predicates on them by deducing everything that follows from $T$.
Next, $T$ may imply the existence of further elements that are connected to the ones obtained on previous step. We keep adding them to the model.
It is not hard to see that the resulting (possibly infinite) model is indeed the universal model.
We omit the formal proof of this and refer the reader to~\cite{GKKPSZ14AI}.
\end{proof}

So, instead of considering a query $q$ over $D\cup T$ we can consider it over $M_D$.
This observation helps to study rewritings.

It is instructive to consider the graph underlying the model $M_D$. The vertices of the graph are elements of the model and there is a directed edge from an element $m_1$ to an element $m_2$ if there is a binary predicate $P$ such that $M_D \models P(m_1, m_2)$. Then in the process above we start with a graph on constants from $\Delta_D$ and then add new vertices whose existence follows from $T$. Note that the premise of the formula~\eqref{eq:tgd} consists of a unary predicate. This means that the existence of a new element in the model is implied solely by one unary predicate that holds on one of the already constructed vertices. Thus for each new vertex of the model we can trace it down to one of the constants $a$ of the theory and one of the atomic formulas $B(a) \in D$.

The maximal (over all $D$) number of steps of introducing new elements to the model
is called the \emph{depth} of the theory $T$. This parameter will be of interest to us.
We note that $M_D$ and thus the depth of $T$ are not necessarily finite.

In what follows it is useful to consider, for each unary predicate $A \in \Sigma$, the universal model $M_D$ for the theory $D = \{A(a)\}$.
As we mentioned, the universal model for an arbitrary $D$ is ``build up'' from these simple universal models.
We denote this model by $M_A$ (instead of $M_{\{A(a)\}}$) and call it the \emph{universal tree} generated by $A$.
The vertex $a$ in the corresponding graph is called the \emph{root} of the universal tree.
All other vertices of the tree are called \emph{inner vertices}.
To justify the name ``tree'' we note that the underlying graph of $M_A$ in all interesting cases is a tree, though not in all cases.
More precisely, it might be not a tree if some formula~\eqref{eq:tgd} in $T$ does not contain any binary predicate $R(x,y)$.

\begin{example} \label{ex:main}

To illustrate, consider an ontology $T$ describing a part of a student projects organization:
\begin{align*}
& \forall x  \, \big(\nm{Student}(x) \to  \exists y\, (\nm{worksOn}(x,y) \land  \nm{Project}(y))\big),\\
& \forall x  \, \big(\nm{Project}(x) \to \exists y \, (\nm{isManagedBy}(x,y) \land \nm{Professor}(y))\big),\\
& \forall x,y \, \big(\nm{worksOn}(x,y) \to \nm{involves}(y,x)\big), \\
& \forall x,y \, \big(\nm{isManagedBy}(x,y) \to \nm{involves}(x,y)\big).
\end{align*}
Some formulas in this theory are of the form different than~\eqref{eq:tgd}, but it will not be important to us.
Moreover, it is not hard to see that this theory can be reduced to the form~\eqref{eq:tgd}
(along with small changes in data).

Consider the query $q(x)$ asking to find those who work with professors:
\begin{equation}\label{q1}
q(x) = \exists y,z\, \big(\nm{worksOn}(x,y) \land \nm{involves}(y,z) \land  \nm{Professor}(z)\big).
\end{equation}
It is not hard to check that the following formula is a rewriting of $q$:
\begin{multline*}
q'(x) \ \ = \ \
\exists y,z\, \bigl[ \nm{worksOn}(x,y) \land \\
\shoveright{ \bigl(\nm{worksOn}(z,y) \lor\nm{isManagedBy}(y,z) \lor\nm{involves}(y,z)\big) \land \nm{Professor}(z) \bigr] \lor {}}\\
\shoveleft{\exists y\, \bigl[ \nm{worksOn}(x,y) \land \nm{Project}(y) \bigr]  \ \ \lor \ \ \nm{Student}(x).}\\
\end{multline*}
That is, for any data $D$ and any constant $a$ in $D$, we have
$$
D \cup T \models q(a)\  \Leftrightarrow \ I_D \models q'(a).
$$

To illustrate the universal model, consider the data 
$$D = \bigl\{\,\nm{Student}(\nm{c}), \ \nm{worksOn}(\nm{c},\nm{b}), \ \nm{Project}(\nm{b}), \ \nm{isManagedBy}(\nm{b},\nm{a})\,\bigr\}.
$$ 
The universal model $M_D$ is presented in Fig.~\ref{fig:canonical}. The left region corresponds to the data $D$,
the upper right region corresponds to the universal tree generated by $\nm{Project}(\nm{b})$ and the lower right region corresponds to the universal tree generated by $\nm{Student}(\nm{c})$. The label of the form $P^{-}$ on an edge, where $P$ is a predicate of the signature, means that there is an edge in the opposite direction labeled by $P$.

\begin{figure}[ht]
\begin{center}%
\begin{tikzpicture}[>=latex, xscale=1.8, point/.style={circle,draw=black,minimum size=1.5mm,inner sep=0pt}]\small
\fill[rounded corners=16pt,fill=gray!20] (-2.5,-2.2) rectangle +(2.5,3);
\filldraw[fill opacity=0.5,fill=gray!5,draw=black,rounded corners=12pt] (4.1,-2.3) -- (-0.9,-2) -- (-0.9,-1) -- (4.1,-0.8) -- cycle;
\filldraw[fill opacity=0.5,fill=gray!10,draw=black,rounded corners=12pt] (2,-0.6) -- (-0.9,-0.4) -- (-0.9,0.6) -- 
(2,0.9) -- cycle;
\draw[dashed,rounded corners=16pt,draw=black] (-2.5,-2.2) rectangle +(2.5,3);
\node at (-2,-1.5) {\large $D$};
\node at (2.5,0.2) {\large $M_{Project}$};
\node at (3.75,-0.52) {\large $M_{Student}$};
\node (b) at (-0.5,-1.5) [point,fill=black,label=above right:{$\nm{c}$},label=below:{$\nm{Student}$}] {};
\node (w1b) at (1.5,-1.5) [point,fill=white,label=above:{\footnotesize$\nm{Project}$}] {};
\draw[->] (b) to node [label=above:{\scriptsize $\nm{worksOn}$},label=below:{\scriptsize $\nm{involves}^-$}] {} (w1b);
\node (w2b) at (3.5,-1.5) [point,fill=white,label=above:{\footnotesize$\nm{Professor}$}] {};
\draw[->] (w1b) to node [label=above:{\scriptsize $\nm{isManagedBy}$},label=below:{\scriptsize $\nm{involves}$}] {} (w2b);
\node (a) at (-0.5,0) [point,fill=black,label=above:{$\nm{Project}$},label=below right:{$\nm{b}$}] {};
\node (c) at (-2,0) [point,fill=black,label=left:{$a$}] {};
\draw[<-] (c) to node [label=above:{\scriptsize $\nm{isManagedBy}\ \ \ \ \ $},label=below:{\scriptsize $\nm{involves}$}] {} (a);
\draw[->] (b) to node [label=left:{\scriptsize \begin{tabular}{c}$\nm{worksOn}$\\$\nm{involves}^-$\end{tabular}}] {} (a);
\node (w1) at (1.5,0) [point,fill=white,label=above:{\footnotesize $\nm{Professor}$}] {};
\draw[->] (a) to node [label=above:{\scriptsize $\nm{isManagedBy}$},label=below:{\scriptsize $\nm{involves}$}] {} (w1);
\end{tikzpicture}%
\end{center}
\caption{An example of a universal model}\label{fig:canonical}
\end{figure}
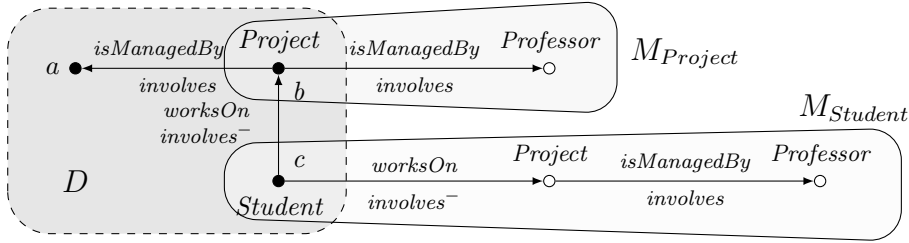

We note that for our query $q(x)$ we have that $q(c)$ follows from $D \cup T$ and we can see that the rewriting $q^\prime(c)$ is true in $M_D$.
Note, however, that $q(c)$ is not true in $D$ due to the incompleteness of the data $D$: it is not known that $a$ is a professor.

\end{example}

From the existence of the universal model (and simplicity of its structure) it can be deduced that for any $q$ there is a rewriting $q^\prime$
having the form of (existentially quantified) disjunction of conjunctions of atoms.
However, it is not hard to provide an example that this rewriting is exponentially larger than $q$ (see~\cite{GKKPSZ14AI}).
By the \emph{size} of the rewriting we mean the number of symbols in the formula.

So to obtain shorter rewriting it is helpful to consider more general types of formulas.
A natural choice would be to allow arbitrary first-order formula as a rewriting. This type is called a first-order rewriting, or a FO-rewriting.
Another option is a positive existential rewriting, or a PE-rewriting. This is a special case of FO-rewriting in which there are no negations and there are only existential quantifiers.
PE-rewritings are more preferable than FO-rewritings since they are more accessible to algorithmic machinery developed for usual databases.
The size of a PE- or a FO-rewriting is a number of symbols in the formula.

Another standard type of rewriting is a nonrecursive datalog rewriting, or NDL-rewriting.
This rewriting does not have a form of first-order formula and instead has the form of DAG-representation of a first-order formula.
Namely, NDL-rewriting consists of the set $\Pi$ of formulas of the form
$$
\forall \vec{x} \left(A_1 \wedge \ldots \wedge A_n \rightarrow A_0\right),
$$
where $A_i$ are atomic formulas (possibly new, not presented in the original signature $\Sigma$) not necessarily of arity $1$ or $2$.
Each $A_i$ depends on (some of) the variables from $\vec{x}$ and each variable in $A_0$ must occur in $A_1 \wedge \ldots \wedge A_n$.
Finally, we need the acyclicity property of $\Pi$. To define it, consider a directed graph whose vertices are predicates $A$ of $\Pi$
and there is an edge from $A$ to $B$ iff there is a formula in $\Pi$ which has $B$ as the right-hand side and contains $A$ in the left-hand side.
Now П is called \emph{acyclic} if the resulting graph is acyclic.
Also an NDL-rewriting contains a goal predicate $G$ and we say that $\vec{a}$ in $\Delta_D$ satisfies $(\Pi, G)$ over the data $D$ iff
$$
D \cup \Pi \models G(\vec{a}).
$$
Thus, a $(\Pi,G)$ is called an NDL-rewriting of the query $q$ if
$$
D \cup T \models q(\vec{a})\  \Leftrightarrow \ D \cup \Pi \models G(\vec{a})
$$
for all $D$ and all $\vec{a}$.
The \emph{size} of an NDL-rewriting $(\Pi, G)$ is the number of symbols in it.

\begin{example} To illustrate the concept of NDL-rewriting we provide explicitly a rewriting for the query $q$ from Example~\ref{ex:main}:
\begin{align*}
& \forall y,z  \left(\nm{worksOn}(z,y) \rightarrow N_1(y,z)\right),\\
& \forall y,z  \left(\nm{isManagedBy}(y,z) \rightarrow N_1(y,z)\right),\\
& \forall y,z  \left(\nm{involves}(y,z) \rightarrow N_1(y,z)\right),\\
& \forall x, y, z   \left(\nm{worksOn}(x,y) \land N_1(y,z) \land \nm{Professor}(z) \rightarrow G(x)\right), \\
& \forall x, y   \left(\nm{worksOn}(x,y) \land \nm{Project}(y) \rightarrow G(x)\right), \\
& \forall x   \left(\nm{Student}(x) \rightarrow G(x)\right),
\end{align*}
where $N_1$ is a new binary predicate and $G$ is the goal predicate of this NDL-rewriting.

It is not hard to see that this rewriting is similar to the PE-rewriting $q^\prime$ from Example~\ref{ex:main}.
Indeed, $N_1(y,z)$ is equivalent to the subformula
$$
\bigl(\nm{worksOn}(z,y) \lor\nm{isManagedBy}(y,z) \lor\nm{involves}(y,z)\big)
$$
of $q^\prime$ and $G(x)$ is equivalent to $q^\prime(x)$.
\end{example}

It turns out that NDL-rewritings are more general than PE-rewritings.
Indeed, a PE-rewriting $q^\prime$ has the form $\exists \vec{y} \phi(\vec{x},\vec{y})$,
where $\phi$ is a monotone Boolean formula applied to atomic formulas (note that the existential quantifiers can be moved to the prefix due to the fact that there are no negations in the formula).
The formulas in $\Pi$ can model $\lor$ and $\land$ operations and thus can model the whole formula $\phi$. For this,
for each subformula of $\phi$ we introduce a new predicate symbol that depends on all variables on which this subformula depends.
We model $\lor$ and $\land$ operations on subformulas one by one. In the end we will have an atom $F(\vec{x},\vec{y})$. Finally, we add to $\Pi$ the formula
\begin{equation} \label{eq:final}
\forall \vec{x},\vec{y}\ \big(F(\vec{x},\vec{y}) \rightarrow G(\vec{x})\big).
\end{equation}
Then we have that, for any $\vec{a},\vec{b}$, $\phi(\vec{a},\vec{b})$ is true on $D$ iff $F(\vec{a},\vec{b})$ is true over $D \cup \Pi$.
Finally, $\exists \vec{y} \phi(\vec{a},\vec{y})$ is true over $D$ iff there is $\vec{b}$ among constants such that $\phi(\vec{a},\vec{b})$ is true.
On the other hand, in $\Pi$ we can deduce $G(\vec{a})$ iff there is $\vec{b}$ such that $F(\vec{a},\vec{b})$ is true.
Thus, given a PE-rewriting, we can construct an NDL-rewriting of approximately the same size.

It is unknown whether NDL-rewritings and FO-rewritings are comparable. On the one hand, NDL-rewritings correspond to Boolean circuits
and FO-rewritings --- to Boolean formulas. On the other hand, FO-rewritings can use negations and NDL-rewritings are monotone.

As we said above, we will consider only conjunctive queries $q(x)$ to knowledge bases.
However, in many cases queries have even simpler structure.
To describe these restricted classes of queries, we have to consider a graph underlying the query.
The vertices of the graph are variables appearing in $q$.
Two vertices are connected iff their labels appear in the same atom of $q$.
If this graph is a tree we call a query \emph{tree-like}.
If the graph is a path, then we call a query \emph{linear}.

\section{Rewriting size lower bounds: general approach} \label{sec:proof_idea}

In this section we will describe the main idea behind the proofs of lower bounds on the size of query rewritings.

Very informally, we encode Boolean functions inside of queries in such a way that the rewritings
correspond to Boolean circuits computing these functions. If we manage to encode hard enough function,
then there will be no small circuits for them and thus there will be no small rewritings.

How exactly do we encode functions inside of queries?
First of all we will restrict ourselves to the data $D$ with only one constant element $a$.
This is a substantial restriction on the data. But since our rewritings should work for any data
and we are proving lower bounds, we can make our task only harder. On the other hand, this restriction makes our lower bounds
more general.

Next, we introduce several unary predicates $A_1, A_2, \ldots, A_n$
and consider the formulas $A_i(a)$. These predicates correspond to Boolean variables $x_1, \ldots, x_n$
of encoded function $f$: the variable $x_i$ is true iff $A_i(a) \in D$.
There are other predicates in the signature and other formulas in $D$.
Their role would be to make sure that
$$
D \cup T \models q(x)
$$
iff the encoded function $f$ is true on the corresponding input.

This approach allows us to characterize the expressive power of various queries and theories.
This characterization is summarized in the following table.

\begin{center}
\begin{tabular}{c|c|c|c}
  & depth $1$ & depth $d>1$ & arbitrary depth \\ \hline
  linear queries & $\leq\NC^1$~\cite{LICS14} & $\NL/\poly$~\cite{DL14} & $\NL/\poly$~\cite{DL14} \\ \hline
  tree-like queries & $\leq\NC^1$~\cite{LICS14} & $\SAC^1$~\cite{DL14} & $\NP/\poly$~\cite{GKKPSZ14AI} \\ \hline
  general queries & $\NL/\poly$~\cite{LICS14}  & $\NP/\poly$~\cite{LICS14} & $\NP/\poly$~\cite{GKKPSZ14AI} \\
\end{tabular}
\end{center}

The columns of the table correspond to the classes of the theories $T$.
The rows of the table correspond to the classes of the queries $q$.
An entry of the table represents the class of functions
that can be encoded by queries and theories of these types.
The results in the table give both upper and lower bounds.
However, in what follows we will concentrate on lower bounds,
that is we will be interested in how to encode hard functions
and we will not discuss why harder functions cannot be encoded.

Next, we need to consider a rewriting of one of the types described above and obtain from it the corresponding computational model computing $f$.
This connection is rather intuitive: rewritings has a structure very similar to certain types of Boolean circuits.
Namely, FO-rewritings are similar to Boolean formulas, PE-rewriting are similar to monotone Boolean formulas
and NDL-rewritings are similar to monotone Boolean circuits.
Thus, polynomial size FO-rewriting means that $f$ is in $\NC^1$, polynomial size PE-rewriting means that $f$ is in $\mNC^1$, and
polynomial size NDL-rewriting means that $f$ is in $\mP/\poly$.
We omit the proofs of these reductions.

Together with the table above this gives the whole spectrum of results on the size of rewritings.
We just need to use the results on the relations between corresponding complexity classes.
For example, in case of depth $1$ theories and path-like or tree-like queries there are polynomial rewritings of all three types.
In case of depth 2 theory and path-like or tree-like queries there are no polynomial PE-rewriting, there are no polynomial FO-rewritings
under certain complexity-theoretic assumption, but there are polynomial NDL-rewritings.
In case of depth 2 theories and arbitrary queries there are no polynomial PE- and NDL-rewritings and there are no polynomial FO-rewritings
under certain complexity-theoretic assumption.

Below we provide further details of the proofs of aforementioned results.
The paper~\cite{GKKPSZ14AI} used an add-hoc construction to deal with the case of unbounded depth and non-linear queries.
Subsequent papers~\cite{LICS14,DL14} provided a unified approach that uses the so-called hypergraph programs.

In the next section we proceed to the discussion of these programs.

\section{Hypergraph programs: origination} \label{sec:hypergraph_intro}

For the sake of simplicity we will restrict ourselves to Boolean queries only.
Consider a query $q = \exists \vec{y} \phi(\vec{y})$ and consider its underlying graph $G$.
Vertices of $G$ correspond to the variables of $q$.
Directed edges of $G$ correspond to binary predicates in $q$.
Each edge $(u,v)$ is labeled by all atomic formulas $P(u,v)$ in $q$.
Each vertex $v$ is labeled by $A$ if $A(v)$ is in $q$.

Let us consider data $D$. We can construct a universal model $M_D$ just by adding universal trees to each element of $D$.
Let us see how the query can be satisfied by elements of the universal model.
For this we need that for each variable $t$ of the query we find a corresponding element in $M_D$ satisfying all the properties of $t$ stated in the query.
This element in $M_D$ can be an element of the data and also can be an element of universal trees.

Thus, for a query $q$ to be satisfied we need an embedding of it into the universal model.
That is we should map vertices of $G$ into the vertices of the universal model $M_D$ in such a way that for each label
in $G$ there is a corresponding label in $M_D$. We call this embedding a \emph{homomorphism}.

Now let us see how a vertex $v$ of $G$ can be mapped into an inner element $w$ of a universal tree $R$.
This means that for all labels of $v$ the vertex $w$ in a universal tree $R$ should have the same labels and
for all adjacent edges of $v$ there should be corresponding edges adjacent to $w$ in a universal tree.
Thus all vertices adjacent to $v$ should be also mapped in the universal tree $R$.
We can repeat this argument for the neighbors of $v$ and proceed until we reach vertices of $G$ mapped into the root of $R$.
So, if one of the vertices of $G$ is embedded into a universal tree $R$, then so is a set of neighboring vertices.
The boundary of this set of vertices should be mapped into the root of the universal tree.

Let us summarize what we have now. An answer to a query corresponds to an embedding of $G$ into the universal model $M_D$.
There are connected induced subgraphs in $G$ that are embedded into universal trees.
The boundaries of these subgraphs (the vertices connected to the outside vertices) are mapped into the root of the universal tree.
Two subgraphs can intersect only by boundary vertices. These subgraphs are called \emph{tree witnesses}.

Given a query we can find all possible tree witnesses in it. Then, for any given data $D$ there is an answer to the query
if we can map the query into the universal model $M_D$. There is such a mapping if we can find a set of disjoint tree witnesses
such that we can map all other vertices into $D$ and the tree witnesses into the corresponding universal trees.

Now assume for simplicity that there is only one element $a$ in $D$. Thus $D$ consists of formulas $A(a)$ and $P(a,a)$.
To decide whether there is an answer to a query we need to check whether there is a set of tree witnesses which do not intersect (except by boundary vertices),
such that all vertices except the inner vertices of tree witnesses can be mapped in $a$.
Consider the following hypergraph $H$: it has a vertex for each vertex of $G$ and for each edge of $G$; for each tree witness there is a hyperedge in $H$ consisting of vertices corresponding to the inner vertices of the tree witness and of vertices corresponding to the edges of the tree witness.
For each vertex $v$ of the hypergraph $H$ let us introduce a Boolean variable $x_v$ and for each hyperedge $e$ of the hypergraph $H$ --- a Boolean variable $x_e$.
For a given $D$ (with one element $a$) let $x_v$ be equal to $1$ iff $v$ can be mapped in $a$ and  let $x_e$ be equal to $1$ iff the unary predicate generating the tree witness corresponding to the hyperedge $e$ is true on $a$.
From the discussion above it follows that there is an answer to a rewriting for a given $D$ iff there is a subset of disjoint hyperedges such that $x_e=1$ for them and they contain all vertices with $x_v=0$.

This leads us to the following definition.
\begin{definition}[Hypergraph program]
A hypergraph program $H$ is a hypergraph whose vertices are labeled by Boolean variables $x_1, \ldots, x_n$, their negations or Boolean constants $0$ and $1$.
A hypergraph program $H$ outputs $1$ on input $\vec{x}\in \{0,1\}^n$ iff there is a set of disjoint hyperedges covering all vertices whose labels evaluates to $0$. We denote this by $H(\vec{x})=1$. A hypergraph program computes a Boolean function $f \colon \{0,1\}^n \to \{0,1\}$ iff for all $\vec{x} \in \{0,1\}^n$ we have $H(\vec{x})=f(\vec{x})$. The size of a hypergraph program is the number of vertices plus the number of hyperedges in it.
A hypergraph program is monotone iff there are no negated variables among its labels.
\end{definition}

\begin{remark}
Note that in the discussion above we obtained somewhat different model. Namely, there were also variables associated to hyperedges of the hypergraph. Note, however, that our definition captures also this extended model. Indeed, we can introduce for each hyperedge $e$ a couple of new fresh vertices $v_e$ and $u_e$ and a new hyperedge $e^\prime$. We add $v_e$ to the hyperedge $e$ and we let $e^\prime = \{v_e, u_e\}$. The label of $v_e$ is $1$ and the label of $u_e$ is the variable $x_e$. It is easy to see that $x_e=0$ iff we cannot use the hyperedge $e$ in our cover.
\end{remark}

So far we have discussed how to encode a Boolean function by a query and a theory.
We have noted that the resulting function is computable by a hypergraph program.
We denote by $\HGP$ the class of functions computable by hypergraph programs of polynomial size
(recall, that we actually consider sequences of functions and sequences of programs).

Various restrictions on queries and theories result in restricted versions of hypergraph programs.
If a theory is of depth $1$, then each tree witness has one inner vertex and thus
two different hyperedges can intersect only by one vertex corresponding to the edge of $G$.
Thus each vertex corresponding to the edge of $G$ can occur in at most two hyperedges and
the resulting hypergraph program is of degree at most $2$.
We denote by $\HGP_k$ the set of functions computable by polynomial size hypergraph programs of degree at most $k$.

If a query is tree-like (or linear), then the hypergraph program will have an underlying tree (or path) structure
and all hyperedges will be its subtrees (subpaths).
We denote by $\HGPt$ ($\HGPp$) the set of functions computable by hypergraph programs of polynomial size and with underlying tree (path) structure.

However, to prove lower bounds we need to show that \emph{any} hypergraph program in certain class can be encoded
by a query and a theory of the corresponding type.
These statements are proved separately by various constructions of queries and theories.
We will describe a construction for general hypergraph programs as an example.

Consider a hypergraph program $P$  and consider its underlying hypergraph $H = (V,E)$.
It would be more convenient to consider a more general hypergraph program $P^{\prime}$
which has the same underlying hypergraph $H$ and each vertex $v in V$ is labeled by a variable $x_v$.
Clearly, the function computed by $P$ can be obtained from the function computed by $P^\prime$
by fixing some variables to constant and identifying some variables (possibly with negations).
Thus it is enough to encode in a query and a theory the function computed by $P^\prime$.
We denote this function by $f$.

To construct a theory and a query encoding $f$
consider the following directed graph $G$. It has a vertex $z_v$ for each vertex $v$ of the hypergraph $H$
and a vertex $z_e$ for each hyperedge $e$ of the hypergraph $H$.
The set of edges of $G$ consists of edges $(z_v, z_e)$ for all pairs $(v,e)$ such that $v \in e$.
This graph will be the underlying graph of the query.
For each vertex $z_e$ the subgraph induced by all vertices on the distance at most $2$ from $z_e$ will be a tree witness.
In other words, this tree witness contains vertices $z_v$ for all $v \in e$ and $z_{e^\prime}$ for all $e^\prime$ such that $e^\prime \cap e \neq \emptyset$.
The latter vertices are boundary vertices of the tree witness.

The signature contains unary predicates $A_v$ for all $v \in V$, unary predicates $A_e, B_e$ and binary predicates $R_e$ for all $e \in E$.
Intuitively, the predicate $A_e$ generates tree-witness corresponding to $z_e$,
the predicate $B_e$ encodes that its input correspond to $z_v$ with $v \in e$,
the predicate $R_e$ encodes that its inputs correspond to $(z_e, z_v)$ and $v \in e$,
the predicate $A_v$ encodes the variable $x_v$ of $f$.

Our Boolean query $q$ consists of atomic formulas
$$
\{ A_v(z_v) \mid v \in V \} \cup \{ R_e(z_e, z_v) \mid v \in e, \text{ for } v \in V \text{ and } e \in E \}.
$$
Here $z_v$ and $z_e$ for all $v \in V$ and $e \in E$ are existentially quantified variables of the query.

Theory $T$ consists of the following formulas (the variable $x$ is universally quantified):
$$
A_e(x) \to \exists y \bigwedge_{\begin{subarray}{c}e \cap e' \neq \emptyset\\ e \ne e'\end{subarray}} \big( R_{e'}(x,y) \land
B_e(y) \big),\ \
B_e(x) \to \bigwedge_{v \in e} A_v(x), \ \  B_e(x) \to \exists y R_e(y,x).
$$
In particular, each predicate $A_e$ generates a universal tree of depth 2 consisting of 3 vertices $a, w_{vertex}^e, w_{edge}^e$
and of the following predicates ($a$ is a root of the universal tree):

\begin{minipage}{0.45\textwidth}
\begin{align*}
&A_{e}(a), \\
&R_{e^\prime}(a,w_{vertex}^e) \text{ for all } e^\prime \neq e,\ e^\prime \cap e \neq \emptyset,\\
&B_{e}(w_{vertex}^e),\\
&A_{v}(w_{vertex}^e)\text{ for all } v \in e,\\
&R_e(w_{edge}^e,w_{vertex}^e).
\end{align*}
\end{minipage}
\hfill
\begin{minipage}{0.45\textwidth}
\begin{center}
\begin{tikzpicture}[>=latex, point/.style={circle,draw=black,thick,minimum size=1.5mm,inner sep=0pt}, wiggly/.style={thick,decorate,decoration={snake,amplitude=0.3mm,segment length=2mm,post length=1mm}},
query/.style={thick},tw/.style={shorten <= 0.1cm, shorten >= 0.1cm,dashed},yscale=1,xscale=0.8]\footnotesize
\node (a) at (0,2) [point, label=right:{$A_{e}$}, label=left:{$a$}]{};
\node (d1) at (0,1) [point, fill=white,label=left:$w_{vertex}^e$,label=right:{$B_{e},A_{v}$}]{};
\node (d2) at (0,0) [point, fill=white,label=left:$w_{edge}^e$]{};
\draw[->] (a)  to node [right]{\scriptsize $R_{e^\prime}$} (d1);
\draw[->] (d2)  to node [right]{\scriptsize $R_{e}$} (d1);
\end{tikzpicture}
\end{center}
\end{minipage}
\medskip

There are other universal trees generated by predicates $B_e$, but we will consider only data in which $B_e$ are not presented, so the corresponding universal  trees also will not be presented in the universal model.

There is one constant $a$ in our data and we will restrict ourselves only to the data containing $A_e(a)$ for all $e$
and $R_{e}(a,a)$ for all $e$ and not containing $B_{e}$ for all $e$. For convenience denote $D_0 = \{A_e(a), R_e(a,a) \text{ for all } e \in E\}$.
The predicates $A_v$ will correspond to the variables $x_v$ of the function $f$.
That is the following claim holds.
\begin{claim}
For all $\vec{x} \in \{0,1\}^{n}$ $f(\vec{x}) = 1$ iff $D\cup T \models q$ for $D = D_0 \cup \{A_v(a) \mid x_v  = 1\}$.
\end{claim}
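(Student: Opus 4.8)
The plan is to invoke Theorem~\ref{thm:universal}, which reduces the entailment $D \cup T \models q$ to the existence of a homomorphism $h$ from the underlying graph $G$ of $q$ into the universal model $M_D$. So first I would describe $M_D$ explicitly for the data $D = D_0 \cup \{A_v(a) \mid x_v = 1\}$. Its only constant is $a$, on which $A_e(a)$ and $R_e(a,a)$ hold for every $e$, together with $A_v(a)$ exactly for those $v$ with $x_v = 1$. Since every $A_e(a)$ is present, for each hyperedge $e$ the theory generates a depth-$2$ universal tree with two fresh inner vertices $w_{vertex}^e$ and $w_{edge}^e$ carrying the predicates listed above: in particular $B_e$ and all $A_v$ with $v \in e$ hold at $w_{vertex}^e$, the edge $R_e(w_{edge}^e, w_{vertex}^e)$ is present, and the cross-edges $R_{e'}(a, w_{vertex}^e)$ are present for every $e' \neq e$ with $e' \cap e \neq \emptyset$. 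Crucially, $R_e(a, w_{vertex}^e)$ is \emph{not} present, since the first formula of $T$ omits the index $e' = e$.

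The key preliminary step is to pin down the possible images of the query variables. For a variable $z_e$ the query demands outgoing $R_e$-edges to the images of all $z_v$ with $v \in e$; the only vertices of $M_D$ with an outgoing $R_e$-edge are $a$ (via $R_e(a,a)$ and the cross-edges) and $w_{edge}^e$ (via $R_e(w_{edge}^e, w_{vertex}^e)$), so $h(z_e) \in \{a, w_{edge}^e\}$. For a variable $z_v$ the query demands $A_v(h(z_v))$, and $A_v$ holds only at $a$ (and only when $x_v = 1$) and at the vertices $w_{vertex}^e$ with $v \in e$. With this in hand I set $S = \{\, e \mid h(z_e) = w_{edge}^e \,\}$ and aim to show that $S$ is a set of pairwise disjoint hyperedges covering $\{\, v \mid x_v = 0 \,\}$, which is exactly the condition $f(\vec{x}) = 1$.

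For the direction $f(\vec{x}) = 1 \Rightarrow D \cup T \models q$, I would start from a disjoint cover $S$ of $\{\, v \mid x_v = 0 \,\}$ and build $h$ directly: send $z_e$ to $w_{edge}^e$ for $e \in S$ and to $a$ otherwise; send $z_v$ to $w_{vertex}^e$ for the unique $e \in S$ covering $v$, and to $a$ if $v$ is uncovered (legal since then $x_v = 1$). Verifying the atoms is routine except for one point worth highlighting: when $e \notin S$ but some $v \in e$ is covered by a different $e' \in S$, the required edge $R_e(a, w_{vertex}^{e'})$ is exactly one of the cross-edges, available because $v \in e \cap e'$ and $e \neq e'$.

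The reverse direction $D \cup T \models q \Rightarrow f(\vec{x}) = 1$ is where the main work lies, and I expect the coverage argument to be the crux. Disjointness of $S$ is easy: if distinct $e_1, e_2 \in S$ shared a vertex $v$, then $h(z_v)$ would have to equal both $w_{vertex}^{e_1}$ and $w_{vertex}^{e_2}$, which are distinct vertices. For coverage, take any $v$ with $x_v = 0$; then $A_v(a)$ fails, so $h(z_v) = w_{vertex}^{e''}$ for some $e'' \ni v$. The query atom $R_{e''}(z_{e''}, z_v)$ now forces an $R_{e''}$-edge into $w_{vertex}^{e''}$, and here the deliberate \emph{absence} of $R_{e''}(a, w_{vertex}^{e''})$ leaves only $R_{e''}(w_{edge}^{e''}, w_{vertex}^{e''})$, forcing $h(z_{e''}) = w_{edge}^{e''}$, i.e. $e'' \in S$ and $v$ is covered. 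This is the subtle step: it is precisely the one missing self-directed edge at each $w_{vertex}^e$ that converts ``$z_v$ being mapped off the root'' into ``$v$ being used in the cover'', and getting this bookkeeping exactly right is the part that needs the most care.
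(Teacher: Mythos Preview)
Your proposal is correct and follows essentially the same approach as the paper: both directions proceed via Theorem~\ref{thm:universal}, analyze the possible images of $z_e$ and $z_v$ in $M_D$, and identify the cover with $S = \{\, e \mid h(z_e) = w_{edge}^e \,\}$. Your disjointness argument (forcing $h(z_v)$ to two distinct $w_{vertex}^{e_i}$) is a minor rephrasing of the paper's (which instead argues that any $e'$ intersecting $e\in S$ must have $z_{e'}$ mapped to $a$), but the underlying reasoning is the same.
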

\begin{proof}
Note first that if $A_v(a)$ is true for all $v$ then the query is satisfiable.
We can just map all vertices $z_e$ and $z_v$ to $a$. However, if some predicate $A_v(a)$ is not presented,
then we cannot map $z_v$ to $a$ and have to use universal trees.

Suppose $f(\vec{x})=1$ for some $\vec{x} \in \{0,1\}^n$ and consider the corresponding data $D$.
There is a subset of hyperedges $E^\prime \subseteq E$ of $H$ such that hyperedges in $E^\prime$ do not intersect and
all $v \in V$ such that $x_v = 0$ lie in hyperedges of $E^\prime$.
Then we can satisfy the query in the following way. We map the vertices $z_e$ with $e \notin E^\prime$ to $a$.
We map all vertices $z_v$ such that $v$ is not contained in hyperedges of $E^\prime$ also into $a$.
If for $z_v$ we have $v \in e$ for $e \in E^\prime$, then we send $z_v$ to the $w_{vertex}^e$ vertex in the universal tree $M_{A_e}$.
Finally, we send vertices $z_e$ with $e \in E^\prime$ to $w_{edge}^e$ vertex of the universal tree $M_{A_e}$.
It is easy to see that all predicates in the query are satisfied.

In the other direction, suppose for data $D$ the query $q$ is true.
It means that there is a mapping of variables $z_v$ and $z_e$ for all $v$ and $e$ into universal model $M_D$.
Note that the vertex $z_e$ can be sent either to $a$, or to the vertex $w_{edge}^e$ in the universal tree $M_{A_{e}}$.
Indeed, only these vertices of $M_D$ has outgoing edge labeled by $R_e$.
Consider the set $E^\prime = \{ e \in E \mid z_e \text{ is sent to } w_{edge}^e\}$.
Consider some $e \in E^\prime$ and note that for any $e^\prime$, such that $e^\prime \neq e$ and $e^\prime \cap e \neq \emptyset$, $z_{e^\prime}$ is on
the distance $2$ from $z_e$ in $G$ and $z_{e^\prime}$ should be mapped in $a$.
Thus hyperedges in $E^\prime$ are non-intersecting.
If for some $z_v$ the atom $A_v(a)$ is not in $D$, then $z_v$ cannot be mapped into $a$.
Thus it is mapped in the vertex $w_{vertex}^e$ in some $M_{A_e}$ for some $e$ containing $v$.
But then $z_e$ should be mapped into $w_{edge}^e$ of the same universal tree (there is only one edge leaving $w_{vertex}^e$
labeled by $R_e$). Thus $e \in E^\prime$ and thus $v$ is covered by hyperedges of $E^\prime$.
Overall, we have that hyperedges in $E^\prime$ give a disjoint cover of all zeros in $P^\prime$ and thus $f(x)=1$.

\end{proof}

\section{Hypergraph programs: complexity} \label{sec:hypergraph_complexity}

We have discussed that hypergraph programs can be encoded by queries and theories.
Now we need to show that there are hard functions computable by hypergraph programs.
For this we will determine the power of various types of hypergraph programs.
Then the existence of hard functions will follow from known results in complexity theory.

We formulate the results on the complexity of hypergraph programs in the following theorem.

\begin{theorem}[\cite{LICS14,DL14}]\label{thm:hgp_complexity}
The following equations hold both in monotone and non-monotone cases:
\begin{enumerate}
  \item $\HGP = \HGP_3 = \NP/\poly$;
  \item $\HGP_2 = \NBP$;
  \item $\HGPp = \NBP$;
  \item $\HGPt = \SAC^1$.
\end{enumerate}
\end{theorem}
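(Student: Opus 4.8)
The plan is to prove each of the four equalities by exhibiting inclusions in both directions, treating the monotone and non-monotone cases in parallel (the constructions below never introduce negations except where the input labels already carry them). The overarching strategy is the same throughout: a hypergraph program accepts iff the set of ``zero-labeled'' vertices admits an exact cover by disjoint hyperedges, so I would repeatedly translate between ``existence of a disjoint cover'' and the accepting conditions of the target computational model. For the upper bounds (computing a hypergraph program inside a complexity class) I would guess the relevant combinatorial object nondeterministically and verify disjointness and coverage; for the lower bounds (simulating a class by a hypergraph program) I would encode the class's accepting computations as exactly-coverable vertex sets.

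For item (1), $\HGP = \HGP_3 = \NP/\poly$, I would first argue $\HGP \subseteq \NP/\poly$: given $\vec{x}$, the existence of a disjoint cover of the zero-labeled vertices is an $\NP$ predicate in the certificate $\vec{y}$ encoding the chosen set of hyperedges, and checking disjointness and coverage is a polynomial-size circuit, matching the definition in~\eqref{eq:NP_poly}. For $\NP/\poly \subseteq \HGP_3$ I would take the defining circuit $C_{n+p(n)}$ guaranteed by~\eqref{eq:NP_poly} and build a degree-$3$ hypergraph program whose covers correspond to satisfying assignments of the auxiliary variables $\vec{y}$ together with a correct gate-by-gate evaluation of $C$; the gadgets for $\wedge$, $\vee$, $\neg$ gates and for the existentially quantified $\vec{y}$ must each be realizable with every vertex in at most three hyperedges. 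Since $\HGP_3 \subseteq \HGP$ trivially, this closes the cycle and shows all three classes coincide.

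For items (2) and (3), I would show $\HGP_2 = \HGPp = \NBP = \NL/\poly$. The key structural observation is that in a degree-$2$ hypergraph program each vertex lies in at most two hyperedges, so one can form the ``dual'' graph whose vertices are the hyperedges and whose edges connect hyperedges sharing a vertex; a disjoint exact cover then corresponds to a structure one can track with a branching program reading the labels, giving $\HGP_2 \subseteq \NBP$. For $\NBP \subseteq \HGPp$ I would lay the branching program's $s$-$t$ paths out along a path-structured hypergraph so that a valid cover exists iff an accepting path exists; since a path structure forces degree at most $2$, this simultaneously yields $\HGPp \subseteq \HGP_2$, and the three inclusions chain into equality with $\NBP = \NL/\poly$ by the fact cited after~\eqref{eq:inclusions}.

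For item (4), $\HGPt = \SAC^1$, I would exploit that a tree hypergraph program has an underlying tree with hyperedges equal to subtrees, so a disjoint cover can be computed by a divide-and-conquer recursion on the tree: at each node one records partial cover information for the subtree, and the unbounded-fan-in $\vee$ over choices of which subtree-hyperedge is used at that node matches exactly the semiunbounded structure of $\SAC^1$, giving $\HGPt \subseteq \SAC^1$. The reverse inclusion $\SAC^1 \subseteq \HGPt$ would encode a logarithmic-depth circuit with unbounded $\vee$-fan-in and bounded $\wedge$-fan-in into a tree of hyperedges mirroring the circuit's parse tree, using subtree-hyperedges to implement the gates. I expect the main obstacle to lie here and in the $\NP/\poly \subseteq \HGP_3$ direction: the delicate part is designing gate gadgets whose disjoint-cover semantics faithfully realize $\wedge$ and $\vee$ while respecting both the degree bound (for item 1) and the subtree/contiguity constraint (for item 4), and verifying that the $\SAC^1$ fan-in asymmetry is preserved in the translation. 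In all cases I would explicitly note that the simulations never add negations beyond those already on input labels, so every inclusion specializes verbatim to the monotone setting.
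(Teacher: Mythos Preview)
Your approach to item~(1) is essentially the paper's own: the paper proves only this item explicitly, and does so by exactly the chain $\HGP_3 \subseteq \HGP \subseteq \NP/\poly \subseteq \HGP_3$, with the last inclusion via per-gate gadgets (a $0$-labelled vertex for each gate $g_i$ shared by two ``truth-value'' hyperedges $e_{g_i},\bar e_{g_i}$, plus small connector vertices implementing $\neg,\vee,\wedge$). The monotone remark is handled identically.

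For items~(2)--(4) the paper gives no proof and simply cites \cite{LICS14,DL14}, so there is nothing to compare your sketch against here. That said, your plan for (2)--(3) contains a genuine error: the assertion ``a path structure forces degree at most $2$'' is false. In a path hypergraph program the hyperedges are arbitrary subpaths (intervals) of the underlying path, and a single vertex may lie in arbitrarily many such intervals, so $\HGPp$ is \emph{not} structurally contained in $\HGP_2$. The equalities $\HGP_2=\NBP$ and $\HGPp=\NBP$ therefore cannot be chained through the inclusion $\HGPp\subseteq\HGP_2$ as you propose; they require separate simulations. In particular, the upper bound $\HGPp\subseteq\NBP$ needs an argument that exploits the \emph{interval} structure of the hyperedges (e.g.\ a left-to-right sweep maintaining the rightmost covered position), not a degree bound. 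You should revise your plan for (2)--(3) accordingly before attempting the details.
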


Together with the discussion of two previous sections this theorem gives the whole picture of proofs of lower bounds on the rewriting size
for considered types of queries and theories.

We do not give a complete proof of Theorem~\ref{thm:hgp_complexity} here,
but in order to present ideas behind it, we give a proof of the first part of the theorem.

\begin{proof}

Clearly, $\HGP_3 \subseteq \HGP$.

Next, we show that $\HGP \subseteq \NP/\poly$.
Suppose we have a hypergraph program of size $m$ with variables $\vec{x}$.
We construct a circuit $C(\vec{x},\vec{y})$ of size $\poly(m)$ satisfying~\eqref{eq:NP_poly}.
Its $\vec{x}$-variables are precisely the variables of the program, and certificate variables $\vec{y}$ correspond to the hyperedges of the program.
The circuit $C$ will output 1 on $(\vec{x},\vec{y})$ iff the family $\{e \mid y_{e} = 1\}$ of hyperedges of the hypergraph
forms a disjoint set of hyperedges covering all vertices labeled by $0$ under $\vec{x}$.
It is easy to construct a polynomial size circuit checking this property.
Indeed, for each pair of intersecting hyperedges $(e, e^\prime)$ it is enough  to compute disjunction $\neg y_e \vee \neg y_{e^\prime}$,
and for each vertex $v$ of the hypergraph with label $t$ and contained in hyperedges $e_{1}, \ldots, e_{k}$
it is enough to compute disjunction $t \vee  y_{e_{1}} \lor \dots \lor y_{e_{k}}$.
It then remains to compute a conjunction of these disjunctions.
It is easy to see that this construction works also in monotone case
(note that applications of $\neg$ to $\vec{y}$-variables in the monotone counterpart of $\NP/\poly$ are allowed).

Now we show that $\NP/\poly \subseteq \HGP_3$.
Consider a function $f \in \NP/\poly$ and consider a circuit $C(\vec{x},\vec{y})$ satisfying~\eqref{eq:NP_poly}.
Let $g_1,\dots,g_n$ be the gates of $C$ (including the inputs $\vec{x}$ and $\vec{y}$). We construct a hypergraph program of degree $\le 3$ computing $f$ of size polynomial in the size of $C$. For each $i$ we introduce a vertex $g_i$ labelled with $0$ and a pair of hyperedges $\bar{e}_{g_i}$ and $e_{g_i}$, both containing $g_i$. No other hyperedge contains $g_i$, and so either $\bar{e}_{g_i}$ or $e_{g_i}$ should be present in any cover of zeros in the hypergraph program. Intuitively, if the gate $g_i$ evaluates to $1$ then $e_{g_i}$ is in the cover, otherwise $\bar{e}_{g_i}$ is there. To ensure this property for each input variable $x_i$, we add a new vertex $v_i$ labelled with $\neg x_i$ to $e_{x_i}$  and a new vertex $u_i$ labelled with $x_i$ to $\bar{e}_{x_i}$.
For a non-variable gate $g_i$, we consider three cases.
\begin{itemize}
\item[--] If $g_i = \neg g_j$ then we add a vertex labelled with $1$ to $e_{g_i}$ and $\bar{e}_{g_j}$, and a vertex labelled with $1$ to $\bar{e}_{g_i}$ and $e_{g_j}$.

\item[--] If $g_i = g_j \vee g_{j'}$ then we add a vertex labelled with $1$ to $e_{g_j}$ and $\bar{e}_{g_i}$,
add a vertex labelled with $1$ to $e_{g_{j'}}$ and $\bar{e}_{g_i}$;
then, we add vertices $h_j$ and $h_{j'}$ labelled with $1$ to $\bar{e}_{g_j}$ and $\bar{e}_{g_{j'}}$, respectively, and a vertex $w_{i}$ labeled with $0$ to $\bar{e}_{g_i}$; finally, we add hyperedges $\{h_j, w_i\}$ and $\{h_{j'}, w_i\}$.

\item[--] If $g_i = g_j \wedge g_{j'}$ then we use the dual  construction.
\end{itemize}
In the first case it is not hard to see that $e_{g_i}$ is in the cover iff $\bar{e}_{g_j}$ is in the cover.
In the second case $e_{g_i}$ is in the cover iff at least one of $e_{g_j}$ and $e_{g_{j'}}$ is in the cover.
Indeed, in the second case if, say, the cover contains $e_{g_j}$  then it cannot contain $\bar{e}_{g_i}$, and so it contains $e_{g_i}$. The vertex $w_i$ in this case can be covered by the hyperedge $\{h_j, w_i\}$ since $\bar{e}_{g_j}$ is not in the cover.
Conversely, if neither $e_{g_j}$ nor $e_{g_{j'}}$ is in the cover, then it must contain both $\bar{e}_{g_j}$ and
$\bar{e}_{g_{j'}}$ and so, neither $\{h_j, w_i\}$ nor $\{h_{j'}, w_i\}$ can belong to the cover
and we will have to include $\bar{e}_{g_i}$ to the cover.
Finally, we add one more vertex labelled with $0$ to $e_{g}$ for the output gate $g$ of $C$. It is not hard to show that, for each $\vec{x}$, there is $\vec{y}$ such that $C(\vec{x},\vec{y})=1$ iff the constructed hypergraph program returns 1 on $\vec{x}$.

For the monotone case, we remove all vertices labelled with $\neg x_i$. Then, for an input $\vec{x}$, there is a cover of zeros  in the resulting hypergraph program iff there are $\vec{y}$ and $\vec{x}' \leq \vec{x}$ with $C(\vec{x}',\vec{y})=1$.
\end{proof}

\section*{Acknowledgments} The author is grateful to Michael Zakharyaschev, Mikhail Vyalyi, Evgeny Zolin and Stanislav Kikot for helpful comments on the preliminary version of this survey.

{\small
\bibliographystyle{abbrv}
\bibliography{OBDA}
}

\end{document}